\definecolor{mygreen}{RGB}{10,110,230}
\definecolor{myred}{RGB}{10,110,230}
\renewcommand{\epsilon}{\varepsilon}
\newcommand{\rnk}[0]{\textsc{Ranking}}
\newcommand{\hiddencomment}[1]{}
\newcounter{protocolcounter}
\crefname{protocolcounter}{Algorithm}{Algorithms}
\newcommand{\opt}[0]{\text{OPT}}
\newcommand{\wis}[0]{\textsc{WIS}}
\renewcommand{\epsilon}[0]{\ensuremath{\varepsilon}}
\let\originalleft\left
\let\originalright\right
\renewcommand{\left}{\mathopen{}\mathclose\bgroup\originalleft}
\renewcommand{\right}{\aftergroup\egroup\originalright}
\crefname{lemma}{Lemma}{Lemmas}
\crefname{theorem}{Theorem}{Theorems}
\crefname{property}{Property}{Properties}
\crefname{claim}{Claim}{Claims}
\crefname{result}{Result}{Results}
\crefname{definition}{Definition}{Definitions}
\crefname{observation}{Observation}{Observations}
\crefname{proposition}{Proposition}{Propositions}
\crefname{assumption}{Assumption}{Assumptions}
\crefname{line}{Line}{Lines}
\crefname{figure}{Figure}{Figures}
\crefname{equation}{}{}
\crefname{section}{Section}{Sections}
\crefname{appendix}{Appendix}{Appendices}
\crefname{algCounter}{Algorithm}{Algorithms}
\Crefname{algCounter}{Algorithm}{Algorithms}
\newtheorem{lemma}{Lemma}[section]
\newtheorem{theorem}[lemma]{Theorem}
\newtheorem{proposition}[lemma]{Proposition}
\newtheorem{definition}[lemma]{Definition}
\newtheorem{claim}[lemma]{Claim}
\newtheorem*{remark*}{Remark}
\def\thm@space@setup{%
  \thm@preskip= 0.2cm
  \thm@postskip=\thm@preskip 
}
\definecolor{mygreen}{RGB}{20,155,20}
\definecolor{myred}{RGB}{195,20,20}
\definecolor{linkcolor}{RGB}{0,0,230}
\definecolor{mylightgray}{RGB}{230,230,230}
\definecolor{verylightgray}{RGB}{240,240,240}
\definecolor{commentcolor}{RGB}{120,120,120}
\newcounter{myalgctr}
\newenvironment{tbox}{
\par\addvspace{0.2cm}
\begin{tcolorbox}[width=\textwidth,
                  boxsep=2pt,
                  left=1pt,
                  right=1pt,
                  top=4pt,
                  boxrule=1pt,
                  arc=0pt,
                  colback=white,
                  colframe=black
                  ]
}{
\end{tcolorbox}
}
\newenvironment{tboxh}{
\par\addvspace{0.2cm}
\begin{tcolorbox}[width=\textwidth,
                  boxsep=2pt,
                  left=1pt,
                  right=1pt,
                  top=4pt,
                  boxrule=1pt,
                  arc=0pt,
                  colback=white,
                  colframe=black,
                  float=t
                  ]
}{
\end{tcolorbox}
}
\newcommand{\tboxhrule}[0]{\vspace{0.1cm} \hrule \vspace{0.2cm}}
\newenvironment{titledtbox}[1]{\begin{tbox}#1 \tboxhrule}{\end{tbox}}
\newenvironment{titledtboxh}[1]{\begin{tboxh}#1 \tboxhrule}{\end{tboxh}}
\renewcommand{\paragraph}{%
  \@startsection{paragraph}{4}%
  {\z@}{10pt}{-1em}%
  {\normalfont\normalsize\bfseries}%
}
\title{A Simple Analysis of Ranking in General Graphs}
 \author{
 Mahsa Derakhshan \\ {\em Northeastern University} \and 
 Mohammad Roghani  \\ {\em Stanford University} \and
 Mohammad Saneian  \\ {\em Northeastern University} \and
 Tao Yu \\ {\em Northeastern University}
 }
\date{}
\begin{document}
\maketitle

\setcounter{page}{1}

We provide a simple combinatorial analysis of the \textsc{Ranking} algorithm, originally introduced in the seminal work by Karp, Vazirani, and Vazirani [KVV90], demonstrating that it achieves a $(1/2 + c)$-approximate matching for general graphs for  $c \geq  0.005$.

\vspace{30pt}



\vspace{-1cm}
\section{Introduction}
In this work, we study a randomized greedy matching algorithm called {\em \rnk{}} for general graphs. This algorithm was first introduced in the seminal work of Karp, Vazirani, and Vazirani~\cite{karp1990optimal} in 1990 for online bipartite matching with one-sided vertex arrivals and was subsequently extended to general graphs~\cite{DBLP:conf/focs/GoelT12} and bipartite graphs~\cite{DBLP:conf/stoc/MahdianY11} with random vertex arrivals. The algorithm belongs to a class of matching algorithms called vertex-iterative (VI) randomized greedy matching algorithms. These algorithms first draw a permutation $\pi$ over the vertices uniformly at random. They then iterate over the vertices in the order of $\pi$, matching each vertex to one of its available neighbors (if any) according to a specific neighbor selection policy. In \rnk{} for general graphs, this policy selects the first unmatched neighbor in the order given by the same permutation $\pi$ used for the initial iteration.

Besides their simplicity and applicability to settings such as online matching, a significance of VI randomized greedy matching algorithms is that they outperform deterministic greedy algorithms and the randomized edge-iterative greedy algorithm\footnote{The algorithm that draws a permutation $\pi$ over the edges uniformly at random and greedily includes the edges according to $\pi$.}~\cite{DBLP:journals/rsa/DyerF91}, achieving an approximation ratio larger than 0.5. The approximation ratio of vertex-iterative randomized greedy matching algorithms is well-understood for bipartite graphs, with the best approximation ratio being between 0.696~\cite{DBLP:conf/stoc/MahdianY11} and 0.75~\cite{DBLP:conf/focs/GoelT12}. However, despite a long line of work~\cite{DBLP:journals/rsa/AronsonDFS95, poloczek2012randomized, chan2014ranking, DBLP:conf/stoc/TangW020, DBLP:conf/focs/GoelT12,0.546ranking}, our understanding of their approximation ratio for general graphs remains limited, with the best-known lower and upper bounds being 0.5469~\cite{0.546ranking} and 0.75~\cite{DBLP:conf/focs/GoelT12}. Aronson, Dyer, Frieze, and Suen~\cite{DBLP:journals/rsa/AronsonDFS95} were the first to show that VI randomized greedy algorithms surpass 0.5 for general graphs by designing a VI algorithm called the modified randomized greedy (MRG)\footnote{In this algorithm, the neighbor selection policy selects one of the unmatched neighbors uniformly at random.} and proving a lower bound of $0.5+1/400,000$ for its approximation ratio. This approximation ratio was later improved to $0.5+1/256$~\cite{poloczek2012randomized}. Both of these papers involve quite complicated combinatorial analyses. Later, in \cite{0.546ranking} and \cite{DBLP:conf/stoc/TangW020}, authors proved 0.5469 and 0.531 approximation guarantees for \rnk{} and MRG, respectively, using factor-revealing linear programs.

In this work, we take a fresh look at the \rnk{} algorithm for general graphs and provide a significantly simpler combinatorial analysis demonstrating that it achieves an approximation ratio of at least 0.505. Our approach is not only more straightforward and intuitive compared to all previous work, but our approximation guarantee also surpasses existing guarantees obtained without solving factor-revealing linear programs.

\paragraph{Our Techniques.} Our proof idea starts with a few simple, well-known observations. Let $\opt$ be the maximum matching of the input graph, which we can assume is a perfect matching w.l.o.g.~\cite{poloczek2012randomized}, and let $R$ be the output of \rnk{}. If $R$ has an approximation ratio smaller than $1/2+c$ for a constant $c>0$, then $R\oplus \opt$ contains at least $(1/2-3c)n/2$ augmenting paths of length three. Furthermore, since $R$ is a maximal matching, the endpoints of these augmenting paths are unmatched in $R$, and they form an independent set. Following these observations, we define a graph structure called a $k$-{\em wasteful independent set} ($k$-\wis{}) in \Cref{def:ltap}. Given a subset of $2k$ vertices, they form a $k$-\wis{} if they are the $2k$ endpoints of $k$ augmenting paths of length three in $R\oplus \opt$. Our analysis then involves a double counting of these $k$-\wis{} for $k=(1/2-3c)n/2$. We provide combinatorial lower and upper bounds for the expected number of $k$-\wis{} in the output of \rnk{} (respectively in \Cref{lem:kltap-all-perms} and \Cref{lem:upperbound-kwis}). We then show that for $c\leq 0.005$, our bounds reach a contradiction, hence proving that \rnk{} in expectation has an approximation ratio of at least $0.505$.

\paragraph{Further Related Work.} After its introduction in the seminal work of Karp, Vazirani, and Vazirani~\cite{karp1990optimal}, \rnk{} and its extensions have been studied extensively in various settings such as online matching with vertex arrivals \cite{DBLP:conf/sosa/EdenFFS21, DBLP:conf/soda/GoelM08, DBLP:journals/sigact/BirnbaumM08, DBLP:conf/soda/DevanurJK13, DBLP:journals/jacm/HuangKTWZZ20, DBLP:conf/soda/0002PTTWZ19, DBLP:conf/focs/0002T0020,0.546ranking}, oblivious matching~\cite{DBLP:conf/stoc/MahdianY11, DBLP:conf/stoc/TangW020, chan2014ranking}, and stochastic matching with query commits~\cite{DBLP:conf/soda/GamlathKS19, DBLP:conf/soda/Derakhshan023}. See the excellent survey  on online matching~\cite{huang2024online} for a more detailed discussion of related work.

\subsection{Preliminaries}

\paragraph{Notation and Definitions.} Throughout the paper, we denote the input graph as $G = (V, E)$, where $|V| = n$. A maximal matching in $G$ is a matching $M$ such that there is no edge $e \in E \setminus M$ for which $M \cup e$ is also a matching. A maximum matching in $G$ is a matching with the maximum number of edges, denoted by $\mu(G)$. Let \opt{} denote a fixed maximum matching of $G$. As discussed in \cite{DBLP:conf/focs/GoelT12, poloczek2012randomized} (see Corollary 2 of \cite{poloczek2012randomized}), one can assume that $G$ contains a perfect matching when analyzing the ratio of the \rnk{} algorithm.

\paragraph{The \rnk{} Algorithm.} We first draw a permutation $\pi$ over the vertices uniformly at random. Then, we iterate over the vertices in order of $\pi$, and match each vertex to its first unmatched neighbor, again in the order of $\pi$, if any. We use $\pi(v)$ to denote the position of vertex $v$ in the permutation selected by \rnk{}.

\paragraph{Augmenting Paths.} An augmenting path $P$ for a matching $M$ in a graph $G$ is a path that alternates between edges of \opt{} and $M$ such that the path starts and ends with vertices that are unmatched in $M$.  The length of the augmenting path is the number of edges it contains. The presence of an augmenting path indicates that the matching $M$ is not maximum. Furthermore, when matching $M$ is far from being maximum, say a 1/2-approximation, there exist many short augmenting paths.

\begin{proposition}[{\cite[Lemma~1]{KonradMM12}}]\label{prop:aug3-path-bound}
Let $\alpha \geq 0$ and $M$ be a maximal matching of $G$ such that $|M| \leq (1/2 + \alpha) \cdot \mu(G)$. Then, at least $(1/2 - 3\alpha)\cdot \mu(G)$ edges of $M$ are in disjoint, length-three augmenting paths.
\end{proposition}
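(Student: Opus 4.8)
The plan is to analyze the symmetric difference $M \oplus \opt$, whose connected components are vertex-disjoint alternating paths and even cycles, and then extract the length-three augmenting paths through a counting argument on how many edges of $M$ each type of component can carry. First I would record two structural facts. Since $\opt$ is a maximum matching it admits no augmenting path, so every odd-length component of $M \oplus \opt$ has exactly one more $\opt$-edge than $M$-edge; equivalently, each odd component is an augmenting path for $M$ (an odd path whose excess favored $M$ would be augmenting for $\opt$, contradicting maximality of $\opt$). Since $M$ is maximal, no edge of $\opt$ can have both endpoints unmatched by $M$, which rules out augmenting paths of length one. Hence every augmenting path has length at least three.

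Next I would count augmenting paths globally. Even cycles and even paths contribute equally many $M$-edges and $\opt$-edges, and edges in $M \cap \opt$ cancel, so the entire surplus $|\opt| - |M|$ is accounted for by the augmenting (odd) paths, each contributing exactly one. Writing $t$ for the number of augmenting paths, this gives $t = \mu(G) - |M|$, and the hypothesis $|M| \le (1/2+\alpha)\mu(G)$ yields $t \ge (1/2 - \alpha)\mu(G)$.

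Then comes the key edge-accounting step. Let $a$ denote the number of length-three augmenting paths; each such path uses exactly one edge of $M$, while every augmenting path of length at least five uses at least two edges of $M$, and by the first step there are no shorter augmenting paths. Counting the edges of $M$ lying in augmenting paths therefore gives $|M| \ge a + 2(t - a) = 2t - a$, so $a \ge 2t - |M|$. Substituting $t = \mu(G) - |M|$ gives $a \ge 2\mu(G) - 3|M| \ge (1/2 - 3\alpha)\mu(G)$. Since these $a$ paths are distinct components of $M \oplus \opt$ they are automatically vertex-disjoint, and each contains exactly one edge of $M$, delivering the claimed bound.

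The step I expect to require the most care is genuinely excluding length-one augmenting paths: they carry zero edges of $M$, so were maximality not to forbid them they would inflate $t$ without contributing to the edge count, degrading the inequality $a \ge 2t - |M|$ to $a \ge 2t - |M| - 2p$, where $p$ is the number of length-one paths. The whole argument hinges on maximality of $M$ forcing $p = 0$ and on maximality of $\opt$ ensuring every odd component favors $\opt$; the remainder is bookkeeping.
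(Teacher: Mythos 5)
Your proof is correct. The paper itself gives no proof of this proposition---it imports it verbatim as Lemma~1 of the cited reference [KonradMM12]---and your argument is exactly the standard one behind that lemma: decompose $M \oplus \opt$ into alternating paths and even cycles, use maximality of $\opt$ to force every odd component to be an $M$-augmenting path, use maximality of $M$ to exclude length-one augmenting paths, and then combine the path count $t = \mu(G) - |M|$ with the edge count $a + 2(t-a) \le |M|$ to get $a \ge 2\mu(G) - 3|M| \ge (1/2 - 3\alpha)\mu(G)$. Your closing observation---that ruling out length-one paths is precisely what keeps the edge-accounting inequality tight---is also the right place to locate the role of maximality.
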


\paragraph{Combinatorial Tools.} We use the following two well-known bounds on the binomial coefficient.

\begin{proposition}[{\cite[Chapter~3]{marshall1979inequalities}}]\label{prop:bionomila-minimize}
Let $a_1, a_2, \ldots, a_n$ be $n$ positive integers such that $\sum_{i = 1}^n a_i = m$. Then, for a positive integer $x$,  $\sum_{i=1}^n {a_i \choose x}$ is minimized when $\{a_1,\ldots, a_n\} = \{\lfloor m/n \rfloor , \lceil m/n \rceil\}$.
\end{proposition}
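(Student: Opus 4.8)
The plan is to prove this via a standard discrete convexity (smoothing/majorization) argument, treating $\binom{a}{x}$ as a convex function of the integer argument $a$, with the convention $\binom{a}{x}=0$ for $a<x$. The engine of the proof is a local exchange inequality: for integers $a>b$ and a positive integer $x$,
\[
\binom{a}{x}+\binom{b}{x}\ \geq\ \binom{a-1}{x}+\binom{b+1}{x}.
\]
First I would establish this by rewriting both sides via the Pascal identity $\binom{c}{x}-\binom{c-1}{x}=\binom{c-1}{x-1}$. This turns the claim into $\binom{a}{x}-\binom{a-1}{x}\geq\binom{b+1}{x}-\binom{b}{x}$, i.e. $\binom{a-1}{x-1}\geq\binom{b}{x-1}$, which holds because $a-1\geq b$ and $\binom{k}{x-1}$ is non-decreasing in the integer $k\geq 0$ (including across the boundary $k=x-1$, where it jumps from $0$ to $1$).

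Next I would run the smoothing argument. Starting from an arbitrary integer vector $(a_1,\dots,a_n)$ with $\sum_i a_i=m$, as long as two coordinates differ by at least $2$, say $a_i\geq a_j+2$, I replace them by $a_i-1$ and $a_j+1$. By the exchange inequality this does not increase $\sum_i \binom{a_i}{x}$, while keeping the total $m$ fixed. To see that the process terminates, I would track the potential $\sum_i a_i^2$, which strictly decreases by $2(a_i-a_j-1)\geq 2$ at each step; hence after finitely many steps no two coordinates differ by more than $1$. The only integer vectors (up to permutation of coordinates) with total $m$ and this property are exactly those whose multiset of values is $\{\lfloor m/n\rfloor,\lceil m/n\rceil\}$. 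Thus the balanced configuration is reached and attains an objective value no larger than the arbitrary starting one, so it is a global minimizer, which is the claim.

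I do not anticipate a serious obstacle, as this is a classical fact; the only points needing care are the degenerate ranges of the binomial coefficient. Concretely, I must verify that the monotonicity $\binom{k}{x-1}\leq\binom{k+1}{x-1}$ and the convexity used in the exchange step remain valid when some $a_i<x$, where the binomials vanish — this is handled by the stated convention together with the boundary value $\binom{x-1}{x-1}=1$. A final bookkeeping remark is that the objective is invariant under permuting the $a_i$, so the conclusion ``the minimizer is $\{\lfloor m/n\rfloor,\lceil m/n\rceil\}$'' should be read as a statement about the multiset of values, which is exactly what the termination of the smoothing procedure delivers.
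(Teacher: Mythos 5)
Your proof is correct. Note, however, that the paper does not prove this proposition at all: it is imported as a black box from Marshall and Olkin's book on majorization (Chapter 3), where it follows from the general theory of Schur-convex functions --- $\sum_i g(a_i)$ with $g$ convex is Schur-convex, and the balanced integer vector is minimal in the majorization order among integer vectors with a fixed sum. What you have written is, in effect, a self-contained elementary proof of exactly that specialized fact: your exchange step $\binom{a}{x}+\binom{b}{x}\geq\binom{a-1}{x}+\binom{b+1}{x}$ is precisely a discrete ``Robin Hood transfer,'' and your smoothing-with-potential argument replays the standard proof that transfers generate the majorization order. All the details check out: the reduction of the exchange inequality via Pascal's identity to the monotonicity of $\binom{\cdot}{x-1}$ is valid under the convention $\binom{a}{x}=0$ for $a<x$ (including the boundary case $\binom{x-1}{x-1}=1$), positivity of the coordinates is preserved by the transfer since $a_i-1\geq a_j+1\geq 2$, the potential $\sum_i a_i^2$ strictly decreases by at least $2$ per step so the process terminates, and the terminal configurations with spread at most $1$ are exactly the multiset $\{\lfloor m/n\rfloor,\lceil m/n\rceil\}$. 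The trade-off is the usual one: the paper's citation is shorter and places the fact in a general framework, while your argument costs half a page but makes the paper self-contained and avoids invoking majorization machinery for what is ultimately a two-line convexity exchange.
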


\begin{proposition}[{\cite[Chapter~11]{thomas2006elements}}]\label{prop:entropy-approximation}
Let $n$ and $\alpha n$ be two positive integers and $1/n \leq \alpha \leq 1/2$. Then, we have ${n \choose \alpha n} \leq 2^{nH(\alpha)}$ where $H(\alpha) = -\alpha \cdot \log_2(\alpha) - (1-\alpha) \cdot \log_2 (1 - \alpha)$.
\end{proposition}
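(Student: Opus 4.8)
The plan is to prove the bound by isolating a single term in the binomial expansion of $1 = (\alpha + (1-\alpha))^n$ and exploiting the fact that every term in that expansion is nonnegative. Writing out the binomial theorem gives
\[
1 = \bigl(\alpha + (1-\alpha)\bigr)^n = \sum_{k=0}^n \binom{n}{k} \alpha^k (1-\alpha)^{n-k},
\]
which expresses $1$ as a sum of nonnegative quantities (indeed, as the total probability mass of a $\mathrm{Bin}(n,\alpha)$ distribution). Since each summand is therefore at most the entire sum, I would single out the index $k = \alpha n$, which is a legitimate integer by hypothesis, to obtain
\[
\binom{n}{\alpha n} \, \alpha^{\alpha n} (1-\alpha)^{(1-\alpha) n} \leq 1.
\]

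Rearranging this inequality isolates the binomial coefficient: dividing through by the strictly positive factor $\alpha^{\alpha n}(1-\alpha)^{(1-\alpha)n}$ yields
\[
\binom{n}{\alpha n} \leq \alpha^{-\alpha n} (1-\alpha)^{-(1-\alpha) n}.
\]
It then remains only to convert the right-hand side into base-$2$ exponential form. Using $x = 2^{\log_2 x}$ for positive $x$, the exponent becomes $-\alpha n \log_2 \alpha - (1-\alpha) n \log_2(1-\alpha) = n H(\alpha)$, which matches the definition of $H$ given in the statement, and the claimed inequality ${n \choose \alpha n} \leq 2^{n H(\alpha)}$ follows.

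I do not anticipate a genuine obstacle, since the argument reduces to a one-line consequence of nonnegativity of the binomial terms; the only point requiring a little care is confirming that $k = \alpha n$ is indeed an integer in $\{0,1,\dots,n\}$, so that the term $\binom{n}{\alpha n}$ actually appears in the sum. This is guaranteed by the assumption that both $n$ and $\alpha n$ are positive integers with $\alpha \leq 1/2$. An alternative route through Stirling's approximation of the factorials would also deliver the bound (and in fact a matching lower bound up to polynomial factors), but it is strictly more laborious and unnecessary for the one-sided inequality needed here.
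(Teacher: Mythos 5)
Your proof is correct: singling out the $k=\alpha n$ term of $1=(\alpha+(1-\alpha))^n=\sum_{k}\binom{n}{k}\alpha^k(1-\alpha)^{n-k}$ and rearranging gives exactly $\binom{n}{\alpha n}\leq \alpha^{-\alpha n}(1-\alpha)^{-(1-\alpha)n}=2^{nH(\alpha)}$. The paper proves this proposition only by citation to Cover and Thomas, and your argument is essentially the same binomial-theorem (type-class) proof given there, so there is nothing further to reconcile.
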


\section{The Analysis}

Throughout the analysis, we assume that the expected approximation ratio of \rnk{} is at most $1/2 + c$ for some rational $c \geq 0$. Then, we prove that if $c$ is smaller than some constant, it leads to a contradiction. This implies a lower bound on the expected approximation ratio of \rnk{}. Moreover, we assume that, without loss of generality, $cn/2$ is an integer. To understand this, consider a graph where the approximation ratio of \rnk{} is $1/2 + c$ and $cn/2$ is not an integer. For $c=a/b$ with integers $a,b$, by copying the same graph $2b$ times, the approximation guarantee of \rnk{} remains unchanged, but now $cn'/2$ is an integer, where $n'=2bn$ is the number of vertices in the new graph. Our proof relies on counting a structure in the graph, referred to as a $k$-wasteful independent set ($k$-\wis), which we define formally in \Cref{def:ltap}. 
Next, in \Cref{lem:kltap-all-perms}, we demonstrate a lower bound on the expected number of $k$-WIS if the approximation ratio of \rnk{} is at most $1/2 + c$. Further, in \Cref{lem:upperbound-kwis}, we prove an upper bound on the number of $k$-WIS. Finally, in \Cref{thm:final}, we combine these bounds to show a lower bound for the constant $c$.

\begin{definition}[$k$-wasteful independent set ($k$-\wis)]\label{def:ltap}
Given matching $R$ the output of \rnk{}, a subset of $2k$ vertices $I$ is a $k$-wasteful independent set, iff vertices in $I$ are end-points of $k$ length-three augmenting paths in $R\oplus \opt{}$. This also implies that $I$ is an independent set of $G$ since \rnk{} outputs a maximal matching and vertices in $I$ are left unmatched in $R$. Figure~\ref{figure:8934} shows an example of a 5-\wis{}.
\end{definition}

\usetikzlibrary{decorations.pathmorphing}

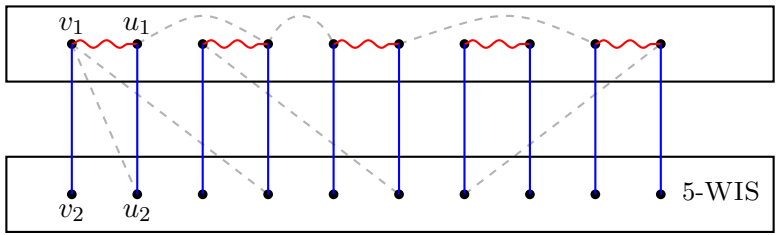
\begin{figure}[h!]
    \centering
    \begin{tikzpicture}[scale=0.5]     
        \pgfmathsetmacro{\scaleFactor}{1.2}

        \definecolor{lightergray}{gray}{0.7}

        \draw[lightergray, thick, dashed] (2.9 * \scaleFactor, 1) .. controls (4.35 * \scaleFactor, 2) and (4.35 * \scaleFactor, 2) .. (5.8 * \scaleFactor, 1);

        \draw[lightergray, thick, dashed] (5.8 * \scaleFactor, 1) .. controls (6.175 * \scaleFactor, 2) and (6.875 * \scaleFactor, 2) .. (7.25 * \scaleFactor, 1);

        \draw[lightergray, thick, dashed] (8.7 * \scaleFactor, 1) .. controls (11.15 * \scaleFactor, 2) and (11.15 * \scaleFactor, 2) .. (13.05 * \scaleFactor, 1);

        \draw[lightergray, thick, dashed] ({1.45 * \scaleFactor}, 1) -- ({2.9 * \scaleFactor}, -3);

        \draw[lightergray, thick, dashed] ({1.45 * \scaleFactor}, 1) -- ({5.8 * \scaleFactor}, -3);

        \draw[lightergray, thick, dashed] ({10.15 * \scaleFactor}, -3) -- ({14.5 * \scaleFactor}, 1);

        \draw[lightergray, thick, dashed] ({4.35 * \scaleFactor}, 1) -- ({8.7 * \scaleFactor}, -3);

        \draw[thick] (0,0) rectangle (17 * \scaleFactor, 2);

        \foreach \i in {1,...,10} {
            \node[circle, fill, inner sep=1.3pt] at ({1.45 * \scaleFactor + 1.45 * \scaleFactor * (\i-1)}, 1) {};
        }

        \node[anchor=south] at ({1.45 * \scaleFactor}, 1) {\(v_1\)};
        \node[anchor=south] at ({2.9 * \scaleFactor}, 1) {\(u_1\)};


        \foreach \i in {0,2,4,6,8} {
            \draw[red, thick, decorate, decoration={snake, amplitude=0.5mm}] ({1.45 * \scaleFactor + 1.45 * \scaleFactor * \i}, 1) -- ({1.45 * \scaleFactor + 1.45 * \scaleFactor * (\i+1)}, 1);
        }

        \draw[thick] (0,-4) rectangle (17 * \scaleFactor, -2);

        \foreach \i in {1,...,10} {
            \node[circle, fill, inner sep=1.3pt] at ({1.45 * \scaleFactor + 1.45 * \scaleFactor * (\i-1)}, -3) {};
        }

        \node[anchor=north] at ({1.45 * \scaleFactor}, -3) {\(v_2\)};
        \node[anchor=north] at ({2.9 * \scaleFactor}, -3) {\(u_2\)};

        \node[anchor=west] at ({17 * \scaleFactor - 2.7}, -2.9) {\small{5-\wis{}}};

        \foreach \i in {1,...,10} {
            \draw[blue, thick] ({1.45 * \scaleFactor + 1.45 * \scaleFactor * (\i-1)}, 1) -- ({1.45 * \scaleFactor + 1.45 * \scaleFactor * (\i-1)}, -3);
        }

    \end{tikzpicture}
   \caption{\small{The red edges are from matching $R$ outputted by \rnk{} and the blue ones are from \opt{}. The lower vertices form a 5-\wis{} as they are the endpoints of five length-three augmenting.}}
    \label{figure:8934}
\end{figure}

\begin{lemma}[Lower Bound on the Expected Number of $k$-\wis{}]\label{lem:kltap-all-perms}
    If the approximation ratio of \rnk{} is at most $1/2 + c$ for $0 \leq c \leq 1/6$, then the expected number of $k$-WIS in the output of \rnk{} is at least one, when $k = (1/2 - 3c) \cdot n/2$.
\end{lemma}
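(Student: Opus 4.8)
The plan is to let $X$ denote the random number of $k$-\wis{} in $R\oplus\opt{}$ and to prove $\E[X]\ge 1$ in two stages: a deterministic (per-permutation) lower bound on $X$ in terms of the number of short augmenting paths, followed by an averaging step that turns the hypothesis $\E[|R|]\le (1/2+c)\cdot n/2$ into the claim. Since $G$ has a perfect matching we have $\mu(G)=n/2$, and since \rnk{} always outputs a \emph{maximal} matching, every realization satisfies $|R|\ge \mu(G)/2 = n/4$; writing $|R|=(1/2+\alpha)\cdot n/2$ this means $\alpha\ge 0$, so \Cref{prop:aug3-path-bound} is applicable in every realization.

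First I would prove the pointwise bound. Fix a permutation $\pi$ and the resulting matching $R$, and apply \Cref{prop:aug3-path-bound} with $\alpha=2|R|/n-1/2$: at least $(1/2-3\alpha)\cdot\mu(G)=n-3|R|$ edges of $R$ lie on vertex-disjoint length-three augmenting paths of $R\oplus\opt{}$. As each such path contains exactly one edge of $R$, these form a collection $\mathcal P$ with $|\mathcal P|\ge n-3|R|$. By \Cref{def:ltap}, the $2k$ endpoints of any $k$ of these paths form a $k$-\wis{}, and distinct size-$k$ subfamilies of $\mathcal P$ give distinct vertex sets because the paths are vertex-disjoint. Hence, with the convention $\binom{m}{k}=0$ for integers $m<k$, every realization satisfies $X\ge \binom{|\mathcal P|}{k}\ge \binom{n-3|R|}{k}$.

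Next I would average. Put $Y:=n-3|R|$, an integer-valued random variable, and $g(m):=\binom{m}{k}$ with $g(m)=0$ for $m<k$. A routine check via Pascal's rule shows $g$ is non-decreasing with non-decreasing successive differences $g(m+1)-g(m)=\binom{m}{k-1}$, hence $g$ is discretely convex; let $\bar g$ be its piecewise-linear interpolation, which is convex and non-decreasing on $\mathbb R$. Then Jensen's inequality yields $\E[X]\ge\E[g(Y)]=\E[\bar g(Y)]\ge\bar g(\E[Y])$, the middle equality holding because $Y$ is integer-valued and $\bar g$ agrees with $g$ on integers. Finally the hypothesis gives $\E[Y]=n-3\E[|R|]\ge n-3(1/2+c)\cdot n/2=(1/2-3c)\cdot n/2=k$, so by monotonicity $\bar g(\E[Y])\ge\bar g(k)=g(k)=1$, which is the desired bound. (Here $c\le 1/6$ is exactly the condition making $k=(1/2-3c)\cdot n/2\ge 0$ a valid nonnegative integer, and $cn/2$ --- hence $k$ --- may be taken integral by the blow-up reduction already set up in the text.)

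I expect the main obstacle to be the averaging step rather than the combinatorics. A naive attempt would only observe that in each realization with $|R|\le(1/2+c)\cdot n/2$ there is \emph{some} $k$-\wis{}, which is useless on its own: realizations with large $|R|$ can have $n-3|R|<k$ and thus no $k$-\wis{} at all, and the hypothesis only controls $|R|$ in expectation. The resolution is that $\binom{n-3|R|}{k}$ grows super-linearly as $|R|$ shrinks, so the many $k$-\wis{} produced in the favorable realizations outweigh the empty ones; convexity of the binomial coefficient, applied through Jensen's inequality, is precisely what makes this averaging rigorous.
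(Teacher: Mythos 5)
Your proof is correct and follows essentially the same route as the paper's: a per-realization lower bound of $\binom{\#\text{paths}}{k}$ on the number of $k$-\wis{} (obtained from \Cref{prop:aug3-path-bound} plus counting $k$-subsets of the disjoint augmenting paths), followed by convexity-based averaging over the random permutation. The only difference is presentational: where the paper sums over all $n!$ permutations and invokes \Cref{prop:bionomila-minimize}, you establish the same discrete convexity of $m \mapsto \binom{m}{k}$ by hand via Pascal's rule and apply Jensen's inequality --- which is, if anything, slightly more careful about integrality and about realizations having fewer than $k$ augmenting paths.
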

\begin{proof}
    First, if the size of the matching produced by \rnk{} for some permutation is $(1/2 + c) \cdot n/2$, then there exist at least $(1/2 - 3c)\cdot n/2$ disjoint length-three augmenting paths by \Cref{prop:aug3-path-bound}. Let $\mu_1, \ldots, \mu_{n!}$ be the sizes of the matchings produced by \rnk{} for all different permutations such that $\mu_i = (1/2 + c_i)\cdot n/2$ for some $c_i \geq 0$ (such $c_i$ always exists since the output of \rnk{} is a maximal matching). For the $i$th permutation, there exist at least $(1/2 - 3c_i)\cdot n/2$ length-three augmenting paths. Therefore, the total number of length-three augmenting paths in all permutations, denoted by $s$, is at least
    \begin{align*}
        s = \sum_{i=1}^{n!} (\frac{1}{2} - 3c_i)\cdot \frac{n}{2} = \frac{n \cdot n!}{4} - \frac{3n}{2} \sum_{i=1}^{n!} c_i   \geq
        (n!) \cdot \left( \frac{n}{4} - \frac{3nc}{2}\right),
    \end{align*}
    where the last inequality is followed by the fact that the approximation ratio of the algorithm is at least $1/2 + c$. Let $a_i$ be the number of length-three augmenting paths in the $i$th permutation. There exist at least ${ a_i \choose k}$ number of $k$-WIS in the $i$th permutation. It follows that the total number of $k$-WIS is $\sum_{i=1}^{n!} {a_i \choose k}$. Moreover, $\sum_{i=1}^{n!} {a_i \choose k}$ is minimized when all $a_i$'s are equal due to \Cref{prop:bionomila-minimize}. Therefore, the total number of $k$-WIS is at least
    \begin{align*}
        \sum_{i=1}^{n!} {a_i \choose k} \geq n! \cdot {s/n! \choose k} = n! \cdot {n/4 - 3nc/2 \choose n/4 - 3nc/2} = n!,
    \end{align*}
    which implies that the expected number of $k$-WIS is at least one.
\end{proof}

\begin{claim}\label{clm:kltap-bound}
    There exist at most ${n/2 \choose 2k} \cdot 3^k$ \underline{different} $k$-WIS for all possible outputs of \rnk{} on input graph $G$.
\end{claim}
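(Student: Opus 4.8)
The plan is to encode each distinct $k$-\wis{} by (i) the set of $\opt$ edges it touches and (ii) a small amount of per-path information, and then to bound the number of possibilities for each factor separately.

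First I would record the structural skeleton. Since $\opt$ is a perfect matching, every vertex of a $k$-\wis{} $I$ has a unique $\opt$-partner, and because $I$ is an independent set and $\opt\subseteq E(G)$, no $\opt$ edge can lie inside $I$; hence each $v\in I$ is $\opt$-matched to a ``middle'' vertex outside $I$, and the corresponding $\opt$ edges are distinct. Thus $I$ determines a set $S(I)$ of exactly $2k$ $\opt$ edges, each contributing exactly one endpoint to $I$. Choosing $S$ accounts for the factor $\binom{n/2}{2k}$, since $\opt$ has $n/2$ edges. It then remains to bound, for a fixed set $S$ of $2k$ $\opt$ edges, the number of distinct $k$-\wis{} $I$ with $S(I)=S$.

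For such an $I$, specifying $I$ is the same as choosing, for each edge of $S$, which of its two endpoints lies in $I$ (the other being the middle vertex), and distinct $I$'s correspond to distinct such orientations. The source of the factor $3$ rather than $4$ is a per-path observation: in any length-three augmenting path $a$–$b$–$c$–$d$ with $\opt$ edges $\{a,b\},\{c,d\}$, the middle edge $\{b,c\}$ lies in $R\subseteq E(G)$, so the two middle vertices are adjacent, whereas the endpoints $a,d\in I$ are non-adjacent. Among the four ways to pick one endpoint from each of the path's two $\opt$ edges, the selection that would place the (adjacent) middle pair into $I$ is therefore forbidden, leaving at most three admissible selections per path. (A second, ``antipodal'' selection is in fact also excluded, giving at most two, but three is all we need.)

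The step I expect to be the main obstacle is upgrading this per-path factor into the global bound $3^k$ for a fixed $S$. The difficulty is that the partition of the $2k$ $\opt$ edges into the $k$ paths is induced by $R$ and is not determined by $I$ or by $S$ alone: any argument that first fixes a pairing of $S$ into paths and then applies the factor-three bound per path would also have to pay for the pairing, an extra $(2k-1)!!$, overshooting the target. I would therefore aim to bound the number of admissible orientations of $S$ \emph{directly}, using both necessary conditions at once, namely that the chosen $I$-endpoints form an independent set and that the complementary middle vertices admit a perfect matching in $G$ (both hold for every maximal matching, so it suffices to count over all maximal $R$). The heart of the proof is showing that these two conditions jointly force at most $3^k$ orientations; finding the right exchange or charging argument that realizes the informal ``one forbidden configuration per path'' globally, rather than pairing-by-pairing, is where I expect the real work to lie.
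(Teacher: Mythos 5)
Your setup matches the paper's: encode each $k$-\wis{} by its skeleton $S$ of $2k$ \opt{}-edges (the factor $\binom{n/2}{2k}$) and then bound the number of admissible endpoint-selections (``orientations'') of a fixed $S$ by $3^k$. You have also correctly put your finger on the genuine subtlety -- the partition of $S$ into $k$ paths is induced by $R$ and is not a function of $S$ or $I$ alone -- a point the paper's own write-up treats only tersely. But your proposal stops exactly at the step that constitutes the proof: you state that showing ``independence of the chosen endpoints plus a perfect matching on the middle vertices forces at most $3^k$ orientations'' is where the real work lies, and you do not supply that argument. As written, this is a proof plan with the key lemma missing, not a proof.

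The gap closes with a short anchoring argument, which is also the right way to read the paper's proof, and it requires far less than the exchange or charging argument you anticipate. Fix a skeleton $S$; if no output of \rnk{} realizes a $k$-\wis{} with skeleton $S$, there is nothing to count, so assume some output $R_0$ realizes one, and let $L$ be the collection of $k$ length-three augmenting paths that $R_0$ induces on $S$. The $k$ middle edges of $L$ lie in $R_0 \subseteq E(G)$, and they pair the $2k$ edges of $S$ into $k$ disjoint pairs. Now take \emph{any} $k$-\wis{} $I'$ with skeleton $S$, realized by any output $R'$, possibly with a completely different path structure: $I'$ consists of unmatched vertices of the maximal matching $R'$, hence is an independent set of $G$, hence cannot contain both endpoints of any of the $k$ middle edges of $L$. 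So every such $I'$ is an orientation of $S$ that avoids, on each of the $k$ disjoint pairs of $L$, one of the four endpoint-selections -- and there are exactly $3^k$ such orientations. One witness pairing supplies $k$ forbidden configurations living on disjoint coordinate pairs, and independence of every other candidate against those $k$ fixed edges of $G$ does all the work; no union over pairings (your feared $(2k-1)!!$ factor) ever arises. This is precisely what the paper's sentence ``since there is an edge between $v_1$ and $u_1$, and $I$ must be an independent set'' accomplishes once $L$ is understood as a single anchored witness rather than as data varying with the candidate $I$.
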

\begin{proof}
    First, we select $2k$ edges from \opt{} that will create $k$ length-three augmenting paths in \Cref{def:ltap}. This selection has ${n/2 \choose 2k}$ possible combinations. To determine which endpoints of these edges will form $I$ (the $k-$\wis{}) in \Cref{def:ltap}, there are $2^{2k}$ possible choices. We further show that this is at most $3^k$. Consider the collection of $k$ length-three alternating paths denoted by $L$. Suppose there is an augmenting path $(v_2, v_1, u_1, u_2)$ in $L$ (for instance, see \Cref{figure:8934}). In this case, we can rule out the possibility of both $v_1$ and $u_1$ being present in $I$ simultaneously, since there is an edge between $v_1$ and $u_1$, and $I$ must be an independent set. Now apply this process to all $k$ augmenting paths in $L$. For each pair of edges in \opt{} that appear in the same length-three augmenting path in $L$, there are 3 possible choices for which endpoint can be included in $I$, since the endpoints of the middle edge in the augmenting path cannot both be in $I$. Consequently, the total number of different possible $k$-WIS is at most ${n/2 \choose 2k} \cdot 3^k$.
\end{proof}

\begin{claim}\label{lem:prob-bound}
Let $I = \{v_1, v_2, \ldots, v_{2k}\}$ be a set of independent vertices in $G$. Then, the vertices in $I$ form a $k$-WIS with probability at most $1/2^{2k}$ in the output of \rnk{}.
\end{claim}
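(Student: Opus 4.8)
The plan is to show that if $I=\{v_1,\dots,v_{2k}\}$ is a $k$-\wis{}, then the random permutation must satisfy $2k$ ordering constraints, one attached to each vertex of $I$, which jointly hold with probability at most $1/2^{2k}$. Throughout I write $u_i$ for the \opt{}-partner of $v_i$; since $I$ is independent and \opt{} is perfect, the $4k$ vertices $v_1,\dots,v_{2k},u_1,\dots,u_{2k}$ are distinct. If $I$ is a $k$-\wis{}, each of the $k$ augmenting paths has the form $v_i-u_i-u_j-v_j$ with $(u_i,u_j)\in R$, so the output induces a perfect matching (pairing) $\sigma$ on $\{1,\dots,2k\}$ in which $i$ is matched to $j$. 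It is convenient to realize $\pi$ through i.i.d.\ ranks $r_v\sim\unifzeroone$.

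First I would record the elementary fact that every edge $(a,b)\in R$ with $\pi(a)<\pi(b)$ is formed when $a$ is processed, and that $b$ is then $a$'s first unmatched neighbor: a partner of $a$ with rank smaller than $b$ would have been selected instead, and $a$ cannot be grabbed before it is processed. Now fix a path $v_i-u_i-u_j-v_j$ with, say, $\pi(u_i)<\pi(u_j)$. When $u_i$ is processed it selects $u_j$; but $v_i$ is a neighbor of $u_i$ (an \opt{}-edge) and remains unmatched throughout, so $u_j$ must precede it, giving $\pi(u_j)<\pi(v_i)$. Symmetrically, because $v_j$ is unmatched, its \opt{}-neighbor $u_j$ is already matched when $v_j$ is processed; since $u_j$ becomes matched only at time $\pi(u_i)$, we get $\pi(u_i)<\pi(v_j)$. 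Hence \emph{every} endpoint is ranked after the inner vertex of the \emph{other} edge of its path, i.e.\ $r_{v_\ell}>r_{u_{\sigma(\ell)}}$ for all $\ell$. This is a clean factor-$\tfrac12$ condition per vertex of $I$.

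The difficulty I expect to be central is that $\sigma$ is not fixed in advance — it is produced by the random output $R$ — so one cannot simply multiply per-path probabilities, and a union bound over the $(2k-1)!!$ candidate pairings would be hopelessly lossy. I would resolve this by conditioning on the ranks $\{r_w:w\notin I\}$ of all non-endpoints. The key observation is that on the $k$-\wis{} event the endpoints are \emph{inert}: since none of them is ever matched, deleting them does not alter the execution of \rnk{} on $V\setminus I$, so $\sigma$ agrees with the pairing produced by \rnk{} on $G-I$ and is a function of $\{r_w:w\notin I\}$ alone. After conditioning, each threshold $r_{u_{\sigma(\ell)}}$ is a fixed number, the endpoint ranks remain i.i.d.\ $\unifzeroone$, and each appears in exactly one constraint, so the $2k$ events $\{r_{v_\ell}>r_{u_{\sigma(\ell)}}\}$ are independent and
\[
\Pr\bigl[\,I\text{ is a }k\text{-\wis}\ \big|\ \{r_w\}_{w\notin I}\,\bigr]\ \le\ \prod_{\ell=1}^{2k}\bigl(1-r_{u_{\sigma(\ell)}}\bigr)\ =\ \prod_{t=1}^{2k}\bigl(1-r_{u_t}\bigr),
\]
where the last equality holds because $\sigma$ is a bijection, which conveniently erases all dependence on $\sigma$. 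Taking expectations over the i.i.d.\ uniform inner ranks then yields $\prod_{t}\E[1-r_{u_t}]=(1/2)^{2k}$, giving the claim. The crux is exactly this step — turning the output-dependent pairing into something measurable with respect to the non-endpoint ranks via inertness — after which the telescoping of the product over the bijection $\sigma$ makes the final bound immediate.
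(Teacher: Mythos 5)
Your proposal is correct, and its structural half coincides with the paper's: the constraints you derive (each endpoint of $I$ must have rank larger than the inner vertex of the \emph{other} \opt{}-edge of its path) are exactly the paper's ``counterpart'' relation $\pi(v) < \pi(C(v))$ for $v \in \bar{I}$. Where you genuinely diverge is in extracting the bound $1/2^{2k}$. The paper conditions only on the relative order $\sigma$ of the inner vertices $\bar{I}$ and inserts the vertices of $I$ one at a time, counting $2i-1$ valid slots out of $2k+i$ at the $i$th insertion, so that $\prod_{i=1}^{2k}\frac{2i-1}{2k+i} = 2^{-2k}$; throughout, it treats the counterpart bijection $C$ as fixed, even though $C$ is determined by the random output $R$ and is not a priori measurable with respect to the order of $\bar{I}$ alone. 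You instead realize $\pi$ via i.i.d.\ ranks, condition on all ranks outside $I$, and argue via the ``inertness'' observation (deleting vertices that end up unmatched does not change the run of \rnk{}) that the pairing is a function of the conditioning; independence of the endpoint ranks and reindexing $\prod_\ell (1 - r_{u_{\sigma(\ell)}}) = \prod_t (1 - r_{u_t})$ over the bijection then give $2^{-2k}$ immediately. Your route is thus more careful precisely at the point where the paper is most slippery --- the output-dependence of the pairing, which you correctly identify as the crux and which the paper never addresses --- while the paper's route is self-contained permutation combinatorics that avoids continuous ranks. Two small caveats on your side: the inertness lemma is asserted rather than proved (it is standard, but a complete write-up needs the induction showing an always-unmatched vertex is never selected and never selects), and you should note explicitly that when \rnk{} on $G - I$ fails to pair the inner vertices among themselves, the conditional probability is zero, so the bound holds trivially there. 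As a final remark, once the pairing is made measurable, both computations can be shortcut: the $2k$ constraints involve $2k$ disjoint vertex pairs, hence are independent events of probability exactly $1/2$ each.
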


\begin{proof}
Let $E_I$ denote the set of edges in \opt{} that have at least one endpoint in $I$. Since \opt{} is a perfect matching and $I$ is an independent set, we have $|E_I| = 2k$, and $E_I$ covers all vertices in $I$. Let $V(E_I)$ be the set of endpoints of $E_I$ and $\bar{I} = V(E_I) \setminus I$. For the edges in $E_I$ to form length-three augmenting paths, they must be partitioned into pairs, with each pair creating one length-three augmenting path.

Let \((v_1, v_2)\) and \((u_1, u_2)\) be such a pair with \(v_2\) and \(u_2\) in \(I\) (for instance, see \Cref{figure:8934}). Then there must exist an edge between \(v_1\) and \(u_1\) to form the augmenting path. We claim that \(\pi(v_1) < \pi(u_2)\) if $\pi$ is the permutation chosen by \rnk{}. To prove this by contradiction, assume that $\pi(v_1) > \pi(u_2)$. When $u_2$ is processed by \rnk{}, since it is wasted (unmatched in $R$), it must be available and have no free neighbor. In particular, this means $u_1$ is matched and has a match with rank less than $\pi(u_2)$. However, this implies that $\pi(\text{match of } u_1) < \pi(u_2) < \pi(v_1)$, contradicting the fact that $u_1$ is matched with $v_1$ in $R$. Thus, we have \(\pi(v_1) < \pi(u_2)\). We call $u_2$ in this case the {\em counterpart} of $v_1$ and let $C(v_1)$ denote the counterpart of $v_1$. For each vertex in $\bar{I}$, its counterpart is a unique vertex in $I$. Hence, the set of all counterparts of vertices in $\bar{I}$ is $I$. Using the same argument, we can show that for all vertices $v\in \bar{I}, \pi(v)< \pi(C(v))$.

If we consider a permutation of the vertices in $I \cup \bar{I}$, it must satisfy the property described in the above argument for $I$ to be a $k$-\wis{}. In the remainder of the proof, we will focus on calculating the probability of this property being satisfied. Let $\sigma$ be a permutation over vertices in $\bar{I}$. In each iteration, we add a new vertex from $I$ to the permutation such that it does not violate the property. Let $u$ be the last vertex in \(\sigma\). In the permutation of vertices in \(I\) and \(\bar{I}\), \(C(u)\) can appear only after \(u\), and since \(u\) is the last element in \(\bar{I}\), the relative place of \(C(u)\) with respect to \(\sigma\) is only after its last element. For \(C(u)\) to be after all elements in \(\sigma\), which has length \(2k\), it has a probability of \(1/(2k+1)\) since the permutation that \rnk{} chooses is uniformly at random.

Let $u_i$ be the $i$th vertex from the end in $\sigma$. When we add the counterpart of $u_i$, there are $2i - 1$ positions (from the end of the permutation to the one right after $u_i$\footnote{As the counterparts of the last $i-1$ vertices are all placed after $u_i$, there are, in total, $2(i-1)+1=2i-1$ slots where we can put $C(u_i)$ to make $\pi(u_i)<\pi(C(u_i))$ hold.}) among the available $(2k + i)$ positions where we can place it. Therefore, 
\begin{align*}
    \Pr[\text{ $I$ is a $k$-WIS in output of \rnk{} }] \leq& \prod_{i=1}^{2k} \frac{2i - 1}{2k+i}\\
     =& \frac{(4k)! / (\prod_{i=1}^{2k} 2i)}{\prod_{i=1}^{2k} (2k + i)}\\
     =& \frac{(4k)! / ((2k)! \cdot 2^{2k})}{(4k)!/(2k)!} = \frac{1}{2^{2k}}.
\end{align*}
\end{proof}

\begin{lemma}[Upper Bound on the Expected Number of $k$-\wis{}]\label{lem:upperbound-kwis}
    The expected number of $k$-\wis{} in the output of \rnk{} is at most ${n/2 \choose 2k} \cdot (3/4)^k$.
\end{lemma}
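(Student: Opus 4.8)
The plan is to combine the two combinatorial facts already established, namely \Cref{clm:kltap-bound} and \Cref{lem:prob-bound}, via a union-bound-style expectation computation. The key observation is that the expected number of $k$-\wis{} in the output of \rnk{} is, by linearity of expectation, the sum over all \emph{potential} $k$-\wis{} $I$ of the probability that $I$ actually forms a $k$-\wis{}. Thus I would write
\begin{align*}
    \E[\#\,k\text{-\wis{}}] = \sum_{I} \Pr[I \text{ is a } k\text{-WIS in the output of \rnk{}}],
\end{align*}
where the sum ranges over all candidate vertex sets $I$ that could conceivably form a $k$-\wis{}.

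Next I would bound each of the two factors separately. By \Cref{lem:prob-bound}, every individual candidate $I$ forms a $k$-\wis{} with probability at most $1/2^{2k}$. By \Cref{clm:kltap-bound}, the number of distinct candidate sets $I$ over all possible outputs is at most ${n/2 \choose 2k} \cdot 3^k$. Substituting both bounds into the expectation gives
\begin{align*}
    \E[\#\,k\text{-\wis{}}] \leq {n/2 \choose 2k} \cdot 3^k \cdot \frac{1}{2^{2k}} = {n/2 \choose 2k} \cdot \frac{3^k}{4^k} = {n/2 \choose 2k} \cdot (3/4)^k,
\end{align*}
which is exactly the claimed bound, since $2^{2k} = 4^k$.

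The one point that requires care — and which I expect to be the main (though modest) obstacle — is justifying that these two bounds may be multiplied cleanly, i.e.\ that the count of candidate sets from \Cref{clm:kltap-bound} and the per-set probability from \Cref{lem:prob-bound} refer to a consistent collection of sets $I$. The subtlety is that \Cref{clm:kltap-bound} counts candidate $k$-\wis{} structures arising from choices of $2k$ edges of \opt{} together with an independent-set-consistent choice of endpoints, while \Cref{lem:prob-bound} takes an arbitrary fixed independent set $I$ of size $2k$ and bounds its probability of becoming a $k$-\wis{}. I would argue that any set $I$ with positive probability of being a $k$-\wis{} must be one of the candidate sets enumerated in \Cref{clm:kltap-bound} (since forming $k$ length-three augmenting paths forces the $2k$ \opt{}-edge structure and the independence constraint), so the union bound ranges over at most ${n/2 \choose 2k}\cdot 3^k$ sets, each contributing at most $1/2^{2k}$. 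With that alignment in place, the multiplication is valid and the lemma follows immediately.
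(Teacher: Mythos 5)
Your proposal is correct and matches the paper's proof essentially verbatim: both combine \Cref{clm:kltap-bound} and \Cref{lem:prob-bound} via linearity of expectation to get ${n/2 \choose 2k} \cdot 3^k \cdot 2^{-2k} = {n/2 \choose 2k} \cdot (3/4)^k$. Your extra remark about aligning the candidate sets counted in \Cref{clm:kltap-bound} with the sets to which \Cref{lem:prob-bound} applies is a sound (if implicit in the paper) justification, not a deviation.
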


\begin{proof}
    By \Cref{clm:kltap-bound}, there exist at most ${n/2 \choose 2k} \cdot 3^k$ different $k$-\wis{}. Also, each of them has a probability $1/(2^{2k})$ to be in the output of \rnk{} by \Cref{lem:prob-bound}. Thus, by the linearity of the expectation, the expected number of $k$-\wis{} in the output of \rnk{} is at most ${n/2 \choose 2k} \cdot (3/4)^k$.
\end{proof}

\begin{theorem}\label{thm:final}
The expected approximation ratio of \rnk{} is at least $0.505$ in general graphs.
\end{theorem}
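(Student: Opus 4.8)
The plan is to turn the two-sided estimate on the expected number of $k$-WIS into a numerical contradiction by specializing to $c = 0.005$. Suppose for contradiction that the expected approximation ratio of \rnk{} is at most $1/2 + c$ with $c = 0.005$. Using the copying reduction described at the start of the section, I may assume $n$ is as large as needed and that $n/2$, $2k$, and $6c\cdot(n/2) = 3cn$ are all positive integers, so that the entropy bound of \Cref{prop:entropy-approximation} is applicable. Taking $k = (1/2 - 3c)\cdot n/2$ as in \Cref{lem:kltap-all-perms}, that lemma gives that the expected number of $k$-WIS is at least one, while \Cref{lem:upperbound-kwis} bounds it above by ${n/2 \choose 2k}\cdot (3/4)^k$. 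Chaining the two yields
\begin{align*}
1 \;\leq\; {n/2 \choose 2k}\cdot (3/4)^k ,
\end{align*}
and the entire task reduces to showing that the right-hand side is strictly below $1$.

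To estimate the right-hand side I would rewrite every exponent as a multiple of $n/2$. Since $2k = (1-6c)\cdot n/2$ and $6c = 0.03 \leq 1/2$, the symmetry of the binomial coefficient together with \Cref{prop:entropy-approximation} gives ${n/2 \choose 2k} = {n/2 \choose 6c\cdot n/2} \leq 2^{(n/2)\,H(6c)}$; note that it is essential here to use the \emph{complementary} form, because $2k/(n/2) = 1-6c$ sits above $1/2$. Writing also $k = (1/2 - 3c)\cdot n/2$, the upper bound factors cleanly as
\begin{align*}
{n/2 \choose 2k}\cdot (3/4)^k \;\leq\; \left(2^{H(6c)}\cdot (3/4)^{1/2 - 3c}\right)^{n/2}.
\end{align*}
Hence it suffices to verify that the base is strictly less than $1$, equivalently that $f(c) := H(6c) + (1/2 - 3c)\log_2(3/4) < 0$.

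The remaining step is a constant evaluation of $f$ at $c = 0.005$. With $6c = 0.03$ one has $H(0.03) \approx 0.194$, while $(1/2 - 3c)\log_2(3/4) = 0.485\cdot\log_2(3/4) \approx -0.201$, so $f(0.005) \approx -0.007 < 0$ with a small but genuine margin. Since the base is then strictly below $1$ and $n/2 \geq 1$, the right-hand side is strictly less than $1$, contradicting the lower bound of one. I would conclude that the expected ratio cannot be at most $0.505$, hence it is at least $0.505$; in fact the bound is strict, because $f$ is increasing in $c$ and its unique root exceeds $0.005$.

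I expect the only delicate points to be bookkeeping rather than ideas. The two substantive things to get right are that the integrality and range hypotheses of \Cref{prop:entropy-approximation} genuinely hold (secured by copying the graph so that $3cn$ and $2k$ are integers and $n$ is large), and that $2k/(n/2) = 1-6c > 1/2$ forces the symmetric form $2^{(n/2)H(6c)}$ rather than a direct application. The sign of $f(0.005)$ is the crux of the argument, but it is a one-line numerical check with room to spare and not a real obstacle.
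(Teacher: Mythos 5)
Your proposal is correct and follows essentially the same route as the paper's own proof: chaining \Cref{lem:kltap-all-perms} and \Cref{lem:upperbound-kwis}, rewriting ${n/2 \choose 2k}$ as ${n/2 \choose 6c\cdot n/2}$ by symmetry, applying \Cref{prop:entropy-approximation}, and checking that $H(6c) + (1/2-3c)\log_2(3/4) < 0$ at $c = 0.005$. The only (harmless) difference is that you fix $c = 0.005$ rather than arguing for all $c \leq 0.005$, and you make the integrality bookkeeping explicit, which the paper handles once at the start of the section.
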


\begin{proof}
    Suppose that the approximation ratio of \rnk{} is at most $1/2 + c$ for $c \leq 0.005$. Let $k = (1/2 - 3c)\cdot n/2$. By \Cref{lem:kltap-all-perms}, the expected number of $k$-\wis{} is at least one. On the other hand, by \Cref{lem:upperbound-kwis}, the expected number of $k$-\wis{} is at most ${n/2 \choose 2k} \cdot (3/4)^k$. Hence, it must hold that ${n/2 \choose 2k} \cdot (3/4)^k \geq 1$. Now, we show that this cannot happen for $c\leq 0.005$. In particular, we have
        \begin{align*}
        &{n/2 \choose 2k} \cdot \left(\frac{3}{4}\right)^k\\
         =& {n/2 \choose (1-6c) \cdot n/2} \cdot \left(\frac{3}{4}\right)^{(1/2 - 3c)\cdot n/2} & (\text{Since } k = (1/2 - 3c)\cdot n/2)\\
        =&{n/2 \choose 6c \cdot n/2} \cdot \left(\frac{3}{4}\right)^{(1/2 - 3c)\cdot n/2} & (\text{By the identity of binomial coefficients})\\
        \leq& 2^{(n/2) \cdot H(6c)} \cdot \left(\frac{3}{4}\right)^{(1/2 - 3c)\cdot n/2} & (\text{By \Cref{prop:entropy-approximation}})\\
        =&2^{(n/2)\cdot H(6c)}\cdot 2^{(n/2)\cdot \log_2(\frac{3}{4})\cdot(1/2 - 3c)}\\
        =&2^{(n/2)\cdot [H(6c)+\log_2(\frac{3}{4})\cdot(1/2 - 3c)]}\\
        <& 1 &(\text{When $c\leq 0.005,\;H(6c)+\log_2(\frac{3}{4})\cdot(1/2 - 3c)<0$}),
    \end{align*}
    which is a contradiction. Therefore, $c$ cannot be smaller than $0.005$ as otherwise, we obtain ${n/2 \choose 2k} \cdot (3/4)^k < 1$ which completes the proof.
\end{proof}

\bibliographystyle{plain}
\bibliography{references}
\end{document}